\documentclass[journal]{IEEEtran}

\setlength\unitlength{1mm}

\usepackage{cite}
\usepackage{amsmath, amsthm, amssymb}
\usepackage{algorithm}
\usepackage{algpseudocode}

\usepackage{graphicx}
\usepackage{textcomp}
\usepackage{xcolor}
\definecolor{cream}{RGB}{253, 246, 227}
\definecolor{sgreen}{RGB}{0,100,0}
\definecolor{lblue}{RGB}{30,144,255}
\definecolor{Brown}{RGB}{165,42,42}
\usepackage{comment}
\usepackage{bbm}
\usepackage{lipsum}
\usepackage{fullpage}

\usepackage{subfigure}
\usepackage{amssymb}
\usepackage{booktabs}
\usepackage{bm}
\usepackage[utf8]{inputenc}
\usepackage{amsmath}
\usepackage{amsthm}
\usepackage{dsfont}
\usepackage{cite}
\usepackage{changepage}
\usepackage{lipsum}
\usepackage{caption}
\usepackage{graphicx}
\usepackage{epstopdf}

\long\def\comment#1{}

\newcommand{\be}{\begin{equation}}
\newcommand{\ee}{\end{equation}}

\newtheorem{theorem}{Theorem}

\newtheorem{lemma}[theorem]{Lemma}


\newfont{\bbb}{msbm10 scaled 700}

\newfont{\bb}{msbm10 scaled 1100}


\newcommand{\ev}{{\bf e}}

\newcommand{\sv}{{\bf s}}

\newcommand{\xv}{{\bf x}}


\newcommand{\Dm}{{\bf D}}

\newcommand{\Hm}{{\bf H}}

\newcommand{\Sm}{{\bf S}}



\newcommand{\Ac}{{\cal A}}

\newcommand{\Dc}{{\cal D}}

\newcommand{\Hc}{{\cal H}}

\newcommand{\Kc}{{\cal K}}

\newcommand{\Nc}{{\cal N}}

\newcommand{\Pc}{{\cal P}}

\newcommand{\Sc}{{\cal S}}


\newcommand{\RNum}[1]{\uppercase\expandafter{\romannumeral #1\relax}}


\newcommand{\muv}{\hbox{\boldmath$\mu$}}

\newcommand{\Deltam}{\hbox{\boldmath$\Delta$}}
\newcommand{\Sigmam}{\hbox{\boldmath$\Sigma$}}


\newcommand{\eqdef}{\stackrel{\Delta}{=}}



\usepackage{balance}
\DeclareMathOperator*{\argmin}{arg\,min}
\begin{document}
\title{Stealth Data Injection Attacks with Sparsity Constraints}

\author{Xiuzhen~Ye,
        I\~naki Esnaola, 
       Samir M. Perlaza,
       and Robert~F. Harrison
       
\thanks{
This research was supported in part by the European Commission through the H2020-MSCA-RISE-2019 program under grant 872172 and in part by the China Scholarship Council.

X. Ye, I. Esnaola, and R.~F. Harrison are with the Department of Automatic Control and Systems
Engineering, University of Sheffield, Sheffield S1 3JD, UK.
I. Esnaola is also with the Department of Electrical Engineering, Princeton University, Princeton NJ
08544, USA. (email: xye15@sheffield.ac.uk, esnaola@sheffield.ac.uk, r.f.harrison@sheffield.ac.uk).

S.~M. Perlaza is with the Institut National de Recherche en Informatique
et Automatique (INRIA), Lyon, France, and also with the Department of
Electrical Engineering, Princeton University, Princeton, NJ 08544 USA (email:
samir.perlaza@inria.fr).}
}

%

\maketitle

\begin{abstract}
Sparse stealth attack constructions that minimize the mutual information between the state variables and the observations are proposed. The attack construction is formulated as the design of a multivariate Gaussian distribution that aims to minimize the mutual information while limiting the Kullback-Leibler divergence between the distribution of the observations under attack and the distribution of the observations without attack. The sparsity constraint is incorporated as a support constraint of the attack distribution. Two heuristic greedy algorithms for the attack construction are proposed. The first algorithm assumes that the attack vector consists of independent entries, and therefore, requires no communication between different attacked locations. {The second algorithm considers correlation between the attack vector entries which results in better attack performance at the expense of coordination between different locations}. We numerically evaluate the performance of the proposed attack constructions on IEEE test systems and show that it is feasible to construct stealth attacks that generate significant disruption with a low number of compromised sensors.

\end{abstract}
\section{Introduction}
Monitoring and controlling processes that are supported by supervisory control and data acquisition (SCADA) systems facilitate an economic and reliable operation of the power system\cite{Scada_book}. The integration between the physical layer of the power system and the cyber layer enables efficient, scalable, and secure operation of the system\cite{GJ_PSanalysis_1994}. 
While advanced communication systems that acquire and transmit observations to a state estimator provide reliable and low-latency state information\cite{AA_PSstateestimation_04}, this cyber layer also exposes the system to malicious attacks.  One of the main cybersecurity threats faced by modern power systems {are} data injection attacks (DIAs), which were first introduced in \cite{LY_TISSEC_11}. {DIAs alter the state estimate of the system obtained from different estimation methods by compromising the system observations without triggering bad data detection mechanisms set by the system operator~\cite{NG_Stateestimation_21}}. 
A large body of literature studies the case in which attack detection is performed by a residual test~\cite{VO_SGC_22} under the assumption that state estimation is deterministic both in centralized and decentralized scenarios~\cite{TA_SGC_11},~\cite{CKKPT_SPM_12},~\cite{MO_JSAC_13},~\cite{EPP_gsip_14}.
In this setting, attack construction that requires access to a small set of observations yields $l_0$-norm minimization problems, which are in general hard to solve. In~\cite{KP_TSG_11}, it is shown that the operator can secure a small fraction of observations to make undetectable attack constructions significantly harder. 

The unprecedented data acquisition capabilities that are now available to cyberphysical systems promote the efficient operation of the smart grid but also increase the threat posed by DIAs because accurate stochastic models of the system can be generated. This  problem is cast in a Bayesian framework in \cite{OK_TSG_11}. In this Bayesian paradigm, the attack detection can be formulated as the likelihood ratio test~\cite{IE_TSG_16} or alternatively machine learning methods~\cite{OM_TNNLS_16} can be employed to learn the geometry of the data generated by the system.
Data analytics are increasingly important in the operation of modern power systems and they are central to the advanced estimation, control, and management of the smart grid \cite{TPP_cup_21}.
For this reason, it is essential to study attack constructions in fundamental terms to understand the impact over a wide range of data analysis paradigms.

Stealth data injection attacks within Bayesian framework were first introduced in~\cite{SE_SGC_17} and then generalized in~\cite{SE_TSG_19}. In this research, the attack construction uses information theoretic measures, i.e. mutual information and Kullback-Leibler (KL) divergence, to characterize the fundamental limits of the attack{~\cite{GS_TIT_05}}.
In~\cite{OK_TSG_11}~\cite{SE_SGC_17}~\cite{SE_TSG_19}~\cite{YE_SGC_20}, the state variables are assumed to follow a Gaussian distribution. From a practical point of view, the adoption of Gaussian random vectors as the data injection attack vectors is validated by real data~\cite{GE_spawc_16}~\cite{GE_TSG_18}. 
However, both the stealth attacks constructed in~\cite{SE_SGC_17} and~\cite{SE_TSG_19} require that the attacker tampers with all the observations in the system, which is not feasible in most scenarios. Information theoretic attack constructions that incorporate sparsity constraints are first proposed in~\cite{YE_SGC_20}. However, the construction of attack vectors that effectively exploits the correlation between attack variables is still an open problem that requires novel approaches. In this paper, we present novel sparse stealth attack constructions that leverage the coordination between different attacked observations to attain a better attack disruption to stealth tradeoff.

The rest of the paper is organized as follows: In Section \ref{system model}, we introduce a Bayesian framework with linearized dynamics for DIAs. Stealth attacks incorporating sparsity constraints are presented in Section \ref{sparse construction}. Independent sparse stealth attacks and correlated sparse stealth attacks are presented in Section \ref{ind sparse construction} and Section \ref{corr sparse construction}, respectively. In Section \ref{numerical results}, we evaluate the performance of the proposed attack constructions for both independent and correlated scenarios on IEEE test systems. The paper closes with conclusions in Section \ref{conclusion}.

{The main contributions of this paper follow: 
(1) A novel stealth attack construction with sparsity constraints in Bayesian framework is proposed where the sparse attack is constructed as random attacks.
(2) Information measures are firstly used to construct sparse attacks. Precisely, the attack construction jointly minimizes mutual information and KL divergence.
(3) We tackle the challenge of the combinatorial character of identifying the support of the sparse attack vector by incorporating an additional sensor that yields a sequential sensor selection problem.
(4) Both independent attacks and correlated attacks are considered. In the first case, the random attack requires no communication between locations because its entries are independent. On the other hand, there is correlation between entries in the second case which leads to a better attack performance at the expense of communication. The convexity of the resulting optimization problems in both cases are provided and the insight obtained from incorporating an additional sensor has been distilled to propose heuristic greedy algorithms, accordingly.}

\textbf{Notation:} We denote the number of state variables on a given IEEE test system by $n$ and the number of the observations by $m$. The set of positive semidefinite matrices of size $n\times n$ is denoted by $S_{+}^n$. The $n$-dimensional identity matrix is denoted as $\textbf{I}_n$.
%
%
The elementary vector $\ev_i\in\mathds{R}^n$ is a vector of zeros with a one in the $i$-th entry. Random variables are denoted by capital letters and their realizations by the corresponding lower case, e.g. $x$ is a realization of the random variable $X$. Vectors of $n$ random variables are denoted by a superscript, e.g. $X^n=(X_1, \ldots, X_n)^{\sf{T}}$ with corresponding realizations denoted by $\xv$. 
Given an $n$-dimensional vector $\muv \in \mathds{R}^n$ and a matrix $\Sigmam \in S_{+}^n$, we denote by $\mathcal{N} (\muv, \Sigmam)$ the multivariate Gaussian distribution of dimension $n$ with mean $\muv$ and covariance matrix $\Sigmam$. 
The mutual information between random variables $X$ and $Y$ is denoted by $I(X;Y)$ and the Kullback-Leibler (KL) divergence between the distributions $P$ and $Q$  is denoted by $D(P\| Q)$.

\section{System model}\label{system model}

\subsection{Observation Model and Attack Setting}

The operation state of a power system is described by a vector $\xv \in{\mathds{R}^n}$ containing the voltages and phases at all the generation and load buses. The state vector $\xv$ is observed through the acquisition function $F: {\mathds{R}^n} \rightarrow {\mathds{R}^m}$. When a linearized observation model is considered for state estimation, it yields an observation model of the form
\be\label{eq:obs_noattack}
Y^m  = \textbf{H}\xv+Z^m,
\ee
where $\textbf{H} \in {\mathds{R}^{m \times n}}$ is the Jacobian of the function $F$ at a given operating point and is determined by the system entries and the topology of the network. The vector $Y^m$ containing the observations is corrupted by additive white Gaussian noise introduced by the sensors, c.f.,~\cite{GJ_PSanalysis_1994} and~\cite{AA_PSstateestimation_04}. Such noise is modelled by the vector $Z^m$ in \eqref{eq:obs_noattack}, which follows a multivariate Gaussian distribution. That is, 
\be 
\label{EqZ}
Z^m \sim \mathcal{N}(\textbf{0},\sigma^2 \textrm{\textbf{I}}_m),
\ee 
where $\sigma^2$ is the noise variance.

In a Bayesian estimation framework, the state variables are described by a random vector $X^n$ with a given distribution. In this study, the random vector $X^n$ is assumed to follow a multivariate Gaussian distribution with a null mean vector and covariance matrix 
\begin{equation}
	\label{EqSigmaXX}
	\Sigmam_{X\!X} \in S_{+}^n. 
\end{equation}
Hence,  the vector of observations $Y^m$ in~\eqref{eq:obs_noattack} follows a multivariate Gaussian distribution with null mean vector and a covariance matrix  $ \Sigmam_{Y\!Y}$  satisfying that
\begin{equation}\label{2} 
	\Sigmam_{Y\!Y} \triangleq \textbf{H} \Sigmam_{X\!X}\textbf{H}^{\sf{T}}+\sigma^2 \textrm{\textbf{I}}_m.
\end{equation}
The resulting observations are corrupted by a malicious attack vector 
$A^m \sim P_{A^m}$,  where $P_{A^m}$ is the distribution of the random vector $A^m$. In the following, $P_{A^m}$ is assumed to be a multivariate Gaussian distribution that satisfies
\begin{equation}\label{eq:Gauss_attack}
	A^m \sim \mathcal{N} (\textbf{0}, \Sigmam_{A\!A}),
\end{equation}
where $\textbf{0} = \left( 0,0, \ldots, 0 \right)$ and $\Sigmam_{A\!A}\in S_{+}^m$ are the mean vector and the covariance matrix of the random vector $A^m$. 

The choice in~\eqref{eq:Gauss_attack} is justified by the fact that a multivariate Gaussian distribution minimizes the mutual information between the state variables and the compromised observations under the assumption that the covariance matrix {$\Sigmam_{A\!A}$} is fixed~\cite{SI_TIT_13}.
Consequently, the compromised observations denoted by $Y_A^m$ are given by
\begin{equation}
	\label{eq:obs_attack}
	Y_A^m  = \textbf{H}X^n+Z^m + A^m,
\end{equation}
where $Y_A^m$ follows a multivariate Gaussian distribution given by
\begin{equation}\label{6}
	Y^m_A  \sim \mathcal{N} (\textbf{0}, \Sigmam_{Y_A\!Y_A})
\end{equation}
with $\Sigmam_{Y_A\!Y_A} = \textbf{H}\Sigmam_{X\!X}\textbf{H}^{\sf{T}} + \sigma^2 \textrm{\textbf{I}}_m + \Sigmam_{A\!A}$.

\subsection{Attack Detection}
As a part of a security strategy, the operator implements an attack detection procedure prior to performing state estimation. Detection is cast as a hypothesis testing problem given by
\begin{subequations}\label{EqHypTestA}
	\begin{align}\label{eq:hypoth_attack}
		\mathcal{H}_0&:\textrm{There is no attack,}\\ 
	\mathcal{H}_1&:\textrm{Observations are compromised}.
\end{align}
\end{subequations}
At time step $i\in\mathds{N}$, the system operator acquires a vector of observations $\bar{Y}_i^m$ and decides whether the vector of observations $\bar{Y}_i^m$ is produced following a no attack scenario as described in \eqref{eq:obs_noattack} or is the result of the attack as described in \eqref{eq:obs_attack}. In our setting, the hypothesis test can be recast in terms of the probability density functions induced by the state variables, the system noise, and the attack onto the observations $\bar{Y}^m$. Hence, the hypotheses in \eqref{EqHypTestA} become 
\begin{subequations}\label{eq:hypoth_attack}
\begin{align}
	\mathcal{H}_0&: \bar{Y}^m\thicksim P_{Y^m}, \\
	\mathcal{H}_1&: \bar{Y}^m\thicksim P_{Y^m_A}.
\end{align}
\end{subequations}
A test to determine what distribution generates the observation data is a deterministic test $T:\mathds{R}^m\rightarrow\{0,1\}$. Given an observation vector $\mathbf{\bar{y}}$, let $T(\mathbf{\bar{y}} )=0$ denote the case in which the test decides $\Hc_0$ upon the observation of $\mathbf{\bar{y}}$; and $T(\mathbf{\bar{y}} )=1$  the case in which the test decides $\Hc_1$.
The performance of the test is assessed in terms of the Type-I error, denoted by $\alpha\eqdef \mathds{P}\left[T\left(\bar{Y}^m\right)=1\right]$, with $\bar{Y}^m\thicksim P_{Y^m}$; and the Type-II error, denoted by $\beta\eqdef \mathds{P}\left[T\left(\bar{Y}^m\right)=0 \right]$, with $\bar{Y}^m\thicksim P_{Y_A^m}$.
Given the requirement that the Type-I error satisfies $\alpha \leq  \alpha'$, with $\alpha'\in[0,1]$, the likelihood ratio test (LRT) is optimal in the sense that it induces the smallest Type-II error $\beta$~\cite{JN_LRT_33}. In this setting, the LRT is given by
\begin{equation}\label{lrt}
T(\mathbf{\bar{y}}) = \mathds{1}_{\left\lbrace L(\mathbf{\bar{y}}) \geqslant \tau \right\rbrace},
\end{equation}
with $L(\mathbf{\bar{y}})$ is the likelihood ratio, i.e., 
\begin{equation}\label{lr}
L(\mathbf{\bar{y}}) = \frac{f_{Y_A^m}(\mathbf{\bar{y}})}{f_{Y^m}(\mathbf{\bar{y}})},
\end{equation}
where the functions $f_{Y_A^m}$ and $f_{Y^m}$ are respectively  the probability density function (pdf) of $Y_A^m$ in \eqref{eq:obs_attack} and the pdf of $Y^m$ in~\eqref{eq:obs_noattack}; and $\tau\in\mathds{R}_+$ in~(\ref{lrt}) is the decision threshold. 
Note that changing the value of $\tau$ is equivalent to change the tradeoff between Type-I and Type-II errors.

\section{Sparse Stealth Attacks}\label{sparse construction}

\subsection{Information Theoretic Metric}
The aim of the attacker is twofold. First, it aims to inflict a data integrity attack that disrupts all processes that use the observations of the system; and second, to guarantee a stealthy attack.
Hence, instead of assuming a particular state estimation procedure, we adopt the methodology in~\cite{SE_TSG_19} to construct stealth attacks that minimize the amount of information acquired by the observations about the state variables. In doing so, the attacker targets a universal utility metric consisting in a weighted sum of two terms~\cite{TM_ElementsofIT}: $(a)$ the mutual information between the state variables and the observations; and $(b)$ the KL divergence between the probability distribution functions of the observations with and without attack. By minimizing this metric, the attacker guarantees a stealthy attack that impinges upon any procedure using the observations.

{The KL divergence term guarantees a stealthy attack in the sense that its minimization leads to minimizing the absolute difference between the probability of false alarm and the probability of attack detection, i.e. $|\alpha-(1-\beta)|$~\cite{JN_LRT_33}~\cite{HV_introtosignal}.}

Within this framework, stealth attacks are constructed as random vectors whose probability distribution functions are the solution to the following optimization problem:
\begin{equation}\label{eq:stealth_opt}
  \min_{P_{A^m}} I(X^n;Y^m_A)+ \lambda D(P_{Y_A^m}\|P_{Y^m}),
\end{equation}
where the optimization domain is the set of all possible $m$-dimensional  Gaussian probability distributions; and  $\lambda\geq 1$ is a weighting parameter that determines the tradeoff between the attack disruption and probability of attack detection. 

The solution to the optimization in~\eqref{eq:stealth_opt} is a multivariate Gaussian distribution for the attack vector. It is shown in \cite{SE_TSG_19} that the optimal Gaussian attack is given by $\bar{P}_{A^m} \sim \Nc(\mathbf{0},\bar{\Sigmam})$ where
\be\label{eq:stealth_cov}
\bar{\Sigmam}={\lambda^{-1/2}}\Hm\Sigmam_{X\! X}\Hm^{\sf T}.
\ee
Note that \eqref{eq:stealth_cov} yields a stealth attack vector that is not sparse, indeed all the entries of the attack realizations are nonzero with probability one, i.e. $\mathds{P}\left[|\textnormal{supp}({A^m})|=m\right]=1$, where we define the support of the attack vector ${A^m}$ as
\be
\label{EqSuppA}
\textnormal{supp}({A^m})\eqdef\left\{i:\mathds{P}\left[A_i=0\right]=0\right\}.
\ee
\vspace{-7mm}
\subsection{Sparse Stealth Attack Formulation}

The attack implementation requires access to the sensing infrastructure of the industrial control system (ICS) operating the power system. Data injection attacks usually exploit the vulnerabilities existing in the field zone by comprising remote terminal units or local secondary level control systems, or alternatively, by getting access to the SCADA system coordinating the control zone of the ICS. For that reason, attack constructions that are required to intrude the least amount of monitoring and data acquisition infrastructure are particularly interesting. In view of this, we study sparse attacks that require access to a limited number of sensors, i.e. we pose the attack construction problem with sparsity constraints by setting the domain as the set of distributions over the attack vector that put non-zero mass on at most $k\leq m$ {attack vector entries}.

In our formulation, this is reflected by an additional optimization constraint of the form  $\left| \textnormal{supp}({A^m})\right|=k$, for some given $k \leqslant m$. Hence, the attacker chooses the distribution over the set of multivariate Gaussian distributions given by
\be
\label{EqSparcityDomain}
\Pc_k\eqdef\left \{ P_{A^m}\sim \Nc(\mathbf{0},\bar{\Sigmam}):\left| \textnormal{supp}({A^m})\right | =k\right \}.
\ee
The resulting $k$-sparse stealth attack construction is therefore posed as the optimization problem:
\be\label{eq:k_sparse_stealth_opt}
  \min_{P_{A^m}\in\Pc_k} I(X^n;Y^m_A)+ \lambda D(P_{Y_A^m}\|P_{Y^m}).
\ee

The optimization domain including the sparsity constraint in~\eqref{EqSparcityDomain} implies an additional difficulty in the construction of stealth attacks with respect to the construction proposed in \cite{SE_TSG_19}. This additional difficulty lies on the combinatorial problem arising from the selection of at most $k$ out of $m$ dimensions of the vector attack to form the support of $A^{m}$.
To tackle this difficulty, we exploit the structure that the Gaussian attack embeds into the sparse attack problem formulation to propose novel attack construction algorithms with verifiable performance guarantees. 
\vspace{-4mm}
\subsection{Gaussian Sparse Stealth  Attack Construction}

The probability distribution function of a random vector is determined by two parameters, i.e., the mean vector and the covariance matrix. Hence, writing the objective function of the optimization problems in~\eqref{eq:stealth_opt} and~\eqref{eq:k_sparse_stealth_opt} in terms of the mean vector and covariance matrix of the attack random vector $A^m$ leads to observing that it is equal to the following expression, up to a constant additive term,
\be
\label{eq:Gauss_cost}
\begin{aligned}
J(\Sigmam_{A\!A})\eqdef&(1-\lambda) \log |\Sigmam_{Y\!Y} +\bm{\Sigma}_{A\!A}| \\
&- \log | \sigma^2\textbf{I}_m + \bm{\Sigma}_{A\!A}|  + \lambda \textrm{tr}(\Sigmam^{-1}_{Y\!Y}\bm{\Sigma}_{A\!A}),
\end{aligned}
\ee
where $\lambda \geq 1$ is introduced in \eqref{eq:stealth_opt}; and the matrix $\Sigmam_{Y\!Y}$ is defined by \eqref{2}.

Hence, the optimization problem in~\eqref{eq:stealth_opt} is equivalent to the following optimization problem:
\be
\label{eq:Gaussian_stealth_constr}
\min_{\bm{\Sigma}_{A\!A} \in S_{+}^m }J(\Sigmam_{A\!A}).
\ee
In order to write the optimization domain of the problem in~\eqref{eq:k_sparse_stealth_opt} in terms of the mean vector and covariance matrix of the attack random vector, it suffices to observe that the sparsity constraint in~\eqref{EqSparcityDomain} translates into a constraint on the number of nonzero entries in the diagonal of the covariance matrix of the attack vector. More specifically,  the optimization domain becomes:
\be
\label{eq:cov_sparse}
\Sc_k\eqdef \left\{\Sm\in S_{+}^m: \| \textnormal{diag}(\Sm) \|_0=k \right\},
\ee
where $\textnormal{diag}(\Sm)$ denotes the vector formed by the diagonal entries of $\Sm$.
Solving~\eqref{eq:Gaussian_stealth_constr} within the optimization domain specified by~\eqref{eq:cov_sparse} re-casts the equivalent $k$-sparse stealth attack construction problem in~\eqref{eq:k_sparse_stealth_opt} as:
\be
\label{eq:Gaussian_k_stealth_constr}
\min_{\bm{\Sigma}_{A\!A} \in \Sc_k } J(\Sigmam_{A\!A}).
\ee

\section{Independent Sparse Stealth  Attacks}\label{ind sparse construction}

We first tackle the case in which the attack vector entries are independent. More specifically, the focus is on product probability measures of the form \begin{equation}
\label{EqIndepPA}
P_{A^m}=\prod_{i=1}^m P_{A_i},
\end{equation} 
where, for all $i \in \lbrace 1,2, \ldots, m\rbrace$, the probability density function of the measure $P_{A_i}$ is Gaussian with zero mean and variance $v_i$. 

The assumption of independence relaxes the correlation requirements between the entries of the attack vector.  
As a result, the set of covariance matrices given by \eqref{eq:cov_sparse}, with $k \leqslant m$, that arises from considering Gaussian attacks is the set

\be
\tilde{\Sc}_k\!\eqdef \bigcup_{\Kc}\!\left\{\Sm\!\in\! S_{+}^m\!\!:\Sm\!=\!\!\sum_{i\in\Kc}v_i \ev_i\ev_i^{\sf T}\textnormal{with}\;v_i\!\in\!\mathds{R}_+\!\right\},
\ee
where the union is over all subsets $\Kc\subseteq\left\{1,2, \ldots, m\right\}$ with $|\Kc|=k\leq m$. Note that it holds that $\tilde{\Sc}_k\subseteq \Sc_k$. 

Under the independence assumption adopted in this section, the optimization problem in \eqref{eq:Gaussian_stealth_constr} boils down to the following problem:
\be
\label{eq:Gaussian_k_stealth_indep}
\min_{{\Sigmam}_{A\!A} \in \tilde{\Sc}_k } J(\Sigmam_{A\!A}),
\ee
which is hard to solve due to the combinatorial character of identifying the support of the sparse random attack vector.
To circumvent this problem, we propose a greedy construction that sequentially updates the set $\textnormal{supp}({A^m})$ in~\eqref{EqSuppA} and determines the corresponding entry in the diagonal of the matrix $\bm{\Sigma}_{A\!A}$ in~\eqref{eq:Gauss_attack}.

\subsection{Greedy Independent Attack Construction}

The proposed construction hinges on the idea that approaching the sensor selection problem in a sequential fashion resembles the single sensor selection problem discussed in \cite{YE_SGC_20}. This enables us to leverage the single sensor selection construction to analytically characterize the cost difference induced by the addition of a new element to the set $\textnormal{supp}({A^m})$ in~\eqref{EqSuppA}. 

More specifically, given the sparsity constraint in \eqref{eq:cov_sparse}, for some $k \leqslant m$, the construction can be divided into $k$ epochs. At each epoch a new element is added to $\textnormal{supp}({A^m})$. 
At epoch $i$, let $\Sigmam_i \in S_{+}^m$ be the covariance matrix of the vector attack under construction. 
Let the set $\mathcal{A}_{i}$ be the set of indices corresponding to the entries of the vector $\textnormal{diag}(\Sigmam_i)$ that are different from zero.  
That is,
\begin{equation}
\label{EqSetAi}
\mathcal{A}_i = \lbrace j \in \lbrace 1,2, \ldots, m \rbrace:  \ev_j^{\sf T}  \Sigmam_i \ev_j > 0 \rbrace.
\end{equation}
For all $i \in \lbrace 1,2, \ldots, k \rbrace$, it is imposed that $\mathcal{A}_{i} \subseteq \lbrace 1,2, \ldots, m \rbrace$ and $|\mathcal{A}_{i}| = i$. This implies that  $\mathcal{A}_{1} \subset \mathcal{A}_{2} \subset \ldots \subset \mathcal{A}_{k} \subset  \lbrace 1,2, \ldots, m \rbrace$.
Hence, 
\begin{equation}
\label{eq:greed_step}
\Sigmam_{i} = \Sigmam_{i-1}+ v \ev_j \ev^{\sf T}_j,
\end{equation}
where $\Sigmam_0$ is a matrix of zeros; the integer $j \in \lbrace 1, 2, \ldots, m \rbrace \setminus \mathcal{A}_{i-1}$ is the index of the new entry at epoch $i$; and $v > 0$ is the value of such entry. 
For ease of presentation we denote the set of indices available to the attacker to choose at epoch $i$, i.e. the entries of the vector $\textnormal{diag}(\Sigmam_{i-1})$ that are zero, as
\be
\label{eq:A_comp}
\mathcal{A}_{i-1}^{\sf c}\eqdef \lbrace 1, 2, \ldots, m \rbrace \setminus \mathcal{A}_{i-1}.
\ee
Our proposition to choose both $j \in \mathcal{A}_{i-1}^{\sf c}$ and $v > 0$ at epoch $i$ as described in \eqref{eq:greed_step} is based on the following optimization problem
		\be
		\label{op:indep_sel}
		\min_{(j, v)\in\mathcal{A}_{i-1}^{\sf c}\times\mathds{R}_+} J(\Sigmam_{i-1}+v \ev_j\ev^{\sf T}_j).
		\ee
The following lemma sheds light on the solution to the problem \eqref{op:indep_sel}.
 
\begin{lemma}
\label{lm:cost_diff}
Let $\Sigmam_1\in S_{+}^m$ and $\Sigmam_2\in S_{+}^m$ be two matrices that satisfy $\Sigmam_2=\Sigmam_1+\bm{\Delta}$, with $\bm{\Delta}\in\mathds{R}^{m\times m}$. Then, the cost function $J$ in \eqref{eq:Gauss_cost} satisfies that 
\be
J(\Sigmam_2)=J(\Sigmam_1)+f(\Sigmam_1, \bm{\Delta}),
\ee
where the function $f: \mathds{R}^{m\times m} \times \mathds{R}^{m\times m} \rightarrow \mathds{R}$ is such that 
\begin{IEEEeqnarray}{rcl}
\nonumber
f(\Sigmam_1, \bm{\Delta}) & = & (1-\lambda) \log \left|\textnormal{\textbf{I}}_m +\left(\Sigmam_{Y\!Y}+\Sigmam_1\right)^{-1}\bm{\Delta}\right| \\
\nonumber
& &  - \log \left| \textnormal{\textbf{I}}_m +\left(\sigma^2\textnormal{\textbf{I}}_m+\Sigmam_1\right)^{-1}\bm{\Delta}\right| \\
\label{Eqf}
& & + \lambda \textnormal{tr}\left(\Sigmam^{-1}_{Y\!Y}\bm{\Delta}\right),
 \end{IEEEeqnarray}
where $\lambda \geq 1$ is introduced in \eqref{eq:stealth_opt}; and the matrix $\Sigmam_{Y\!Y}$ is defined by \eqref{2}.
\end{lemma}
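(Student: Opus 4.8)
The plan is to prove the identity by direct computation, exploiting the fact that $J$ in \eqref{eq:Gauss_cost} is a sum of two log-determinant terms and one trace term, each of which can be treated separately after substituting $\Sigmam_2 = \Sigmam_1 + \bm{\Delta}$ and subtracting $J(\Sigmam_1)$. Since the additive constant by which $J$ differs from the original objective cancels in the difference, it suffices to manipulate the three explicit terms of \eqref{eq:Gauss_cost}.

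First I would treat the first log-determinant term. Writing $\Sigmam_{Y\!Y} + \Sigmam_2 = (\Sigmam_{Y\!Y} + \Sigmam_1) + \bm{\Delta}$ and using multiplicativity of the determinant in the form $|\Mm + \Nm| = |\Mm|\,|\textnormal{\textbf{I}}_m + \Mm^{-1}\Nm|$, valid whenever $\Mm$ is invertible, gives $\log|\Sigmam_{Y\!Y} + \Sigmam_2| - \log|\Sigmam_{Y\!Y} + \Sigmam_1| = \log|\textnormal{\textbf{I}}_m + (\Sigmam_{Y\!Y} + \Sigmam_1)^{-1}\bm{\Delta}|$. Multiplying by $(1-\lambda)$ yields the first line of \eqref{Eqf}. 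The second log-determinant term is handled identically with $\Sigmam_{Y\!Y}$ replaced by $\sigma^2\textnormal{\textbf{I}}_m$, producing the second line of \eqref{Eqf} together with its overall minus sign. The trace term is linear in its matrix argument, so $\lambda\big(\textnormal{tr}(\Sigmam_{Y\!Y}^{-1}\Sigmam_2) - \textnormal{tr}(\Sigmam_{Y\!Y}^{-1}\Sigmam_1)\big) = \lambda\,\textnormal{tr}(\Sigmam_{Y\!Y}^{-1}\bm{\Delta})$, the third line of \eqref{Eqf}. Summing these three contributions gives $J(\Sigmam_2) - J(\Sigmam_1) = f(\Sigmam_1, \bm{\Delta})$, as claimed.

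The only point requiring care — and the step I would verify first — is the invertibility needed to apply the factorization identity, namely that $\Sigmam_{Y\!Y} + \Sigmam_1$ and $\sigma^2\textnormal{\textbf{I}}_m + \Sigmam_1$ are nonsingular. This holds because $\Sigmam_{Y\!Y} = \textbf{H}\Sigmam_{X\!X}\textbf{H}^{\sf T} + \sigma^2\textnormal{\textbf{I}}_m$ from \eqref{2} is positive definite (its summand $\textbf{H}\Sigmam_{X\!X}\textbf{H}^{\sf T} \succeq 0$ and $\sigma^2\textnormal{\textbf{I}}_m \succ 0$), and $\Sigmam_1 \in S_+^m$ is positive semidefinite, so both $\Sigmam_{Y\!Y} + \Sigmam_1$ and $\sigma^2\textnormal{\textbf{I}}_m + \Sigmam_1$ are positive definite and hence invertible, regardless of whether $\bm{\Delta}$ or $\Sigmam_2$ is positive semidefinite. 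Beyond this, the argument is purely algebraic and presents no substantive obstacle; notably the identity holds for an arbitrary perturbation $\bm{\Delta} \in \mathds{R}^{m\times m}$, which is exactly the generality the lemma asserts and which will later permit the specialization $\bm{\Delta} = v\,\ev_j\ev_j^{\sf T}$ arising in \eqref{op:indep_sel}.
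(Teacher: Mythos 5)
Your proof is correct and follows the same direct-computation route as the paper, which simply asserts that the difference $J(\Sigmam_2)-J(\Sigmam_1)$ equals $f(\Sigmam_1,\bm{\Delta})$ without writing out the algebra. Your version supplies the missing details — the factorization $|\Mm+\Nm|=|\Mm|\,|\textnormal{\textbf{I}}_m+\Mm^{-1}\Nm|$ applied to each log-determinant term, linearity of the trace, and the positive-definiteness argument justifying invertibility — all of which are exactly the steps the paper's one-line proof leaves implicit.
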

\begin{proof}
The proof consists in showing that the difference between $J(\Sigmam_2)$ and $J(\Sigmam_1)$ yields
\begin{IEEEeqnarray}{rcl}
\nonumber
J(\Sigmam_2)-J(\Sigmam_1) &  
= &(1-\lambda) \log \left|\textnormal{\textbf{I}}_m +\left(\Sigmam_{Y\!Y}+\Sigmam_1\right)^{-1}\bm{\Delta}\right| \\
\nonumber
& &- \log \left| \textnormal{\textbf{I}}_m +\left(\sigma^2\textnormal{\textbf{I}}_m+\Sigmam_1\right)^{-1}\bm{\Delta}\right|\\
& &  + \lambda \textnormal{tr}\left(\Sigmam^{-1}_{Y\!Y}\bm{\Delta}\right),
\end{IEEEeqnarray} 
which completes the proof.
\end{proof}

The relevance of Lemma~\ref{lm:cost_diff} is that it enables the selection of both $j \in\mathcal{A}_{i-1}^{\sf c}$ and $v > 0$ at epoch $i$ based on a simpler optimization problem than that in~\eqref{op:indep_sel}. Indeed, the selection problem results in
\be
		\label{op:indep_sel2}
		\min_{(j,v)\in\mathcal{A}_{i-1}^{\sf c}\times\mathds{R}_+} f(\Sigmam_{i-1}, v \ev_j\ev^{\sf T}_j),
		\ee
where the function $f$ is defined in \eqref{Eqf}.		
Theorem \ref{th:indep_sel} provides the solution to the optimization problem in \eqref{op:indep_sel2}.
	\begin{theorem}
		\label{th:indep_sel}
		Let $k$ satisfy $0 < k \leqslant m$,  and for all $i \in \lbrace 1,2, \ldots, k \rbrace$, denote by $\left(j^{\star}, v^{\star} \right) \in \mathcal{A}_{i-1}^{\sf c}\times\mathds{R}_+$
the solution to the optimization problem in  \eqref{op:indep_sel}. Then, the following holds
\begin{IEEEeqnarray}{rcl}
\label{EqJStar}
j^{\star} & = & \argmin_{j \in \mathcal{A}_{i-1}^{\sf c}} J(\Sigmam_{i-1}+v_{j} \ev_{j}\ev^{\sf T}_{j}) \;\mbox{ and }\\ 
\label{EqThetaJStar}
v^{\star}  & = & v_{j^{\star}}, 
\end{IEEEeqnarray}
where, for all $j \in \mathcal{A}_{i-1}^{\sf c}$ 		
\begin{IEEEeqnarray}{l}
\nonumber
\!\!\!\!\!\!v_{j^*}  =  \left(\dfrac{\beta_{j} -\alpha_{j}+\beta_{j} \alpha_{j}\sigma^2 }{2\beta_{j} \alpha_{j}} \right) \\
\label{eq:sigma_i_ind}
 \cdot \!\!\left(\!\!\sqrt{\!1\! - \!\dfrac{4\beta_{j} \alpha_{j} \left(\beta_{j}\sigma^2-\alpha_{j}\sigma^2 - \dfrac{\alpha_{j}\sigma^2+1}{\lambda} \right)}{\left(\beta_{j} -\alpha_{j}+\beta_{j} \alpha_{j}\sigma^2 \right)^2} }\! -\!1\!\right) \,
\end{IEEEeqnarray}
with 
\begin{IEEEeqnarray}{rcl}
\alpha_j &\eqdef & \textnormal{tr}\left(\left(\Sigmam_{Y\!Y}+\Sigmam_{i-1}\right)^{-1} \ev_{j^*}\ev^{\sf T}_{j^*} \right),\\
\beta_j &\eqdef & \textnormal{tr}\left(\Sigmam_{Y\!Y}^{-1}  \ev_{j^*}\ev^{\sf T}_{j^*}\right),
\end{IEEEeqnarray}
and the real $\sigma > 0$ in \eqref{eq:sigma_i_ind} is introduced in \eqref{EqZ}.
	\end{theorem}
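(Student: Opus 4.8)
The plan is to decouple the two-variable minimization in~\eqref{op:indep_sel} into an inner minimization over the magnitude $v$ for each fixed index $j$, followed by an outer minimization over $j\in\mathcal{A}_{i-1}^{\sf c}$. The outer step is then immediate and yields exactly~\eqref{EqJStar}--\eqref{EqThetaJStar}: for each candidate $j$ let $v_j$ denote the optimal magnitude of the inner problem, and the best pair $(j^\star,v^\star)$ is recovered by comparing the resulting costs $J(\Sigmam_{i-1}+v_j\ev_j\ev_j^{\sf T})$ across all admissible $j$. The substance of the proof is therefore the inner problem, namely minimizing $v\mapsto f(\Sigmam_{i-1}, v\ev_j\ev_j^{\sf T})$ over $v>0$, where $f$ is the cost increment furnished by Lemma~\ref{lm:cost_diff} (so that, since $J(\Sigmam_{i-1})$ is a constant, minimizing $J(\Sigmam_{i-1}+v\ev_j\ev_j^{\sf T})$ and minimizing $f$ coincide).

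First I would insert $\bm{\Delta}=v\ev_j\ev_j^{\sf T}$ into~\eqref{Eqf} and collapse the two log-determinant terms using the rank-one identity $\det(\textbf{I}_m+v\,\Mm\ev_j\ev_j^{\sf T})=1+v\,\ev_j^{\sf T}\Mm\ev_j$, valid for an arbitrary matrix $\Mm$. This turns the first term into $(1-\lambda)\log(1+v\alpha_j)$ with $\alpha_j=\ev_j^{\sf T}(\Sigmam_{Y\!Y}+\Sigmam_{i-1})^{-1}\ev_j$, and the trace term into $\lambda v\beta_j$ with $\beta_j=\ev_j^{\sf T}\Sigmam_{Y\!Y}^{-1}\ev_j$, matching the quantities defined in the theorem. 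The simplification specific to the independent case is that $\Sigmam_{i-1}=\sum_{l\in\mathcal{A}_{i-1}}v_l\ev_l\ev_l^{\sf T}$ is diagonal, so $\sigma^2\textbf{I}_m+\Sigmam_{i-1}$ is diagonal and, because $j\notin\mathcal{A}_{i-1}$, its $(j,j)$ inverse entry equals $1/\sigma^2$; hence the second log-determinant term reduces to $-\log(1+v/\sigma^2)$. The inner objective thus becomes the scalar function
\be
g_j(v)=(1-\lambda)\log(1+v\alpha_j)-\log(1+v/\sigma^2)+\lambda v\beta_j.
\ee

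Next I would record that $g_j$ is convex on $(0,\infty)$: since $\lambda\geq 1$ the term $(1-\lambda)\log(1+v\alpha_j)$ has nonnegative second derivative, $-\log(1+v/\sigma^2)$ is convex, and the remaining term is linear. Convexity guarantees that any interior stationary point is the global minimizer, so it suffices to solve $g_j'(v)=0$. Writing the derivative as $(1-\lambda)\alpha_j/(1+v\alpha_j)-1/(\sigma^2+v)+\lambda\beta_j=0$ and clearing the denominators $(1+v\alpha_j)(\sigma^2+v)$ produces a quadratic in $v$; its admissible root is precisely~\eqref{eq:sigma_i_ind}, written in the normalized quadratic-formula form $\tfrac{-b}{2a}\big(\sqrt{1-4ac/b^2}-1\big)$.

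The main obstacle I anticipate is final bookkeeping rather than any conceptual difficulty: matching the coefficients of the quadratic to the exact grouping displayed in~\eqref{eq:sigma_i_ind} and, crucially, verifying that the selected root is strictly positive and feasible (keeping $1+v\alpha_j>0$), so that it is the genuine minimizer over $\mathds{R}_+$ rather than a spurious root or a boundary value. I would settle this by checking the sign of $g_j'(0^+)$ together with the $v\to\infty$ behavior, exploiting $\lambda\geq 1$ and $\alpha_j,\beta_j>0$, which jointly single out which of the two quadratic roots lies in the feasible region and confirm it coincides with~\eqref{eq:sigma_i_ind}.
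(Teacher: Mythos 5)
Your proposal is correct and follows essentially the same route as the paper's proof: reduce via Lemma~\ref{lm:cost_diff} to the scalar objective $(1-\lambda)\log(1+\alpha_j v)-\log(1+v/\sigma^2)+\lambda\beta_j v$, invoke convexity for $\lambda\geq 1$, set the derivative to zero to obtain a quadratic in $v$, and keep the root in $\mathds{R}_+$. The only difference is one of completeness in your favor: you spell out the rank-one determinant identity, the role of the diagonal structure of $\Sigmam_{i-1}$ in producing the $\log(1+v/\sigma^2)$ term, and a sign analysis of $g_j'(0^+)$ to justify the choice of root, all of which the paper compresses into ``after some algebraic manipulation'' and ``choosing the solution such that $v\in\mathds{R}_+$.''
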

	\hrule
\vspace{3mm}
\begin{proof}
 It follows from Lemma \ref{lm:cost_diff} that the optimization problem in \eqref{op:indep_sel2} is equivalent to 
 
 \be
 \begin{aligned}
 	\min_{(j,v)\in\mathcal{A}_{i-1}^{\sf c}\times\mathds{R}_+} &(1-\lambda) \log \left|\textnormal{\textbf{I}}_m +\left(\Sigmam_{Y\!Y}+\Sigmam_{i-1}\right)^{-1}v\ev_{j} \ev_{j}^{\sf T}\right| \\
 	&  - \log \left| \textnormal{\textbf{I}}_m +\left(\sigma^2\textnormal{\textbf{I}}_m+\Sigmam_{i-1}\right)^{-1}v\ev_j\ev_j^{\sf T}\right| \\
 	& + \lambda \textnormal{tr}\left(\Sigmam^{-1}_{Y\!Y}v\ev_j\ev_j^{\sf T}\right).
 \end{aligned}
\ee
 After some algebraic manipulation it follows that 
  \be
 \begin{aligned}\label{att_ind_mini_pro}
 	\min_{(j,v)\in\mathcal{A}_{i-1}^{\sf c}\times\mathds{R}_+} &\!\!(1\!-\!\lambda) \log (1+\alpha_j v)\! -\! \log(1\!+\!\frac{v}{\sigma^2}) \!+\! \lambda \beta_j v,
 \end{aligned}
 \ee
 which is convex for $\lambda \geq 1$. The only solution of the minimization problem in \eqref{att_ind_mini_pro} is obtained by letting the derivative to zero, which yields 
 \be\label{att_ind_deri}
 \beta_j\alpha_jv^2 + (\beta_j-\alpha_j+\beta_j\alpha_j\sigma^2)v +\beta_j\sigma^2-\alpha_j\sigma^2-\frac{\alpha_j\sigma^2+1}{\lambda}\! =\! 0
 \ee
 Note that \eqref{att_ind_deri} is quadratic with two solutions. The result follows by choosing the solution such that $v \in \mathds{R}_+$. This completes the proof.
\end{proof}
The proposed greedy construction is described in Algorithm~\ref{alg:greedy_independent}.
\begin{algorithm}
	\caption{$k$-sparse independent attack construction}
	\label{alg:greedy_independent}
	\begin{algorithmic}[1]
		\Require 	$\Hm$ in \eqref{eq:obs_noattack};\newline
				$\sigma^2$ in \eqref{EqZ};\newline
				$\Sigmam_{X\!X}$ in \eqref{EqSigmaXX};\newline
				$\lambda$ in \eqref{eq:Gauss_cost}; and \newline
				$k$ in \eqref{EqSetAi}.
		\Ensure $\bm{\Sigma}_{A\!A}$ in \eqref{eq:Gauss_attack}.
		\State Set $\Ac_0=\left\{\emptyset \right\}$
		\State Set $\Sigmam_0=\mathbf{0}$
		\For {$j=1$ to $k$}
		\For {$\ell \in\Ac_{i-1}^{\sf c}$}
		\State Compute $v_{\ell}$ in~\eqref{eq:sigma_i_ind}
		\EndFor
		\State Compute $j^{\star}$ in~\eqref{EqJStar}
		\State Compute $v^{\star}$ in~\eqref{EqThetaJStar}
		\State Set $\Ac_j = \Ac_{j-1}\cup \left\{j^{\star}\right\}$
		\State Set $ {\Sigmam}_j = \sum_{i\in\Ac_j}v_{i}\ev_i\ev_i^{\sf T}$ 
		\EndFor
		\State $\bm{\Sigma}_{A\!A} = \sum_{i\in\Ac_k}v_{i}\ev_i\ev_i^{\sf T}$ 
	\end{algorithmic}
\end{algorithm}

\section{Correlated Sparse Stealth  Attacks}\label{corr sparse construction}
\subsection{Correlation Structure}\label{op_dep_att_con}

In this section, the assumption of independence in \eqref{EqIndepPA} is dropped. This case boils down to the attack construction given in \eqref{eq:Gaussian_k_stealth_constr}, i.e. the optimization is carried over the set of covariance matrices with non-zero off-diagonal entries that account for the correlation between different attack entries. 
In this case the addition of a new index to the set of $k$ attacked observations introduces off-diagonal entries in the difference between covariance matrices described in Lemma \ref{lm:cost_diff}. More precisely, the difference introduced by selecting the index $i$ is given by
$
\Delta_i\in\Dc_i
$
with
\be
\Dc_i=\bigcup_{\sv\in\mathds{R}^m}\left\{\Dm\in\mathds{R}^{m\times m}: \Dm=\sv^{\sf T}\!\otimes \ev_i + \sv \otimes \ev_i^{\sf T}, \right\}.
\ee
Note that the vector $\sv$ determines the second order moments describing the covariance between attacked observations.
As in the independent case, characterizing the difference enables to formulate the optimization problem that yields the minimum cost increase introduced by a new index in the attack support. Let $\Ac_{k-1}$ denote set of indices of attacked observations and $\Sigmam_{i-1}\in\Sc_{i-1}$ the covariance matrix of the attack vector over those $i-1$ observations. Then the sensor selection problem at step $i$ is given by the optimization problem:
\begin{equation}
	\begin{aligned}\label{eq:opt_corr}
		 &\min_{j,\Deltam}   \ J\left(\Sigmam_{i-1}+\Deltam\right)\\
		 &\ \textnormal{s.t.} \ \ \ {j\in\Ac_{i-1}^{\sf c}},\\
		&\qquad \  \Deltam\in\Dc_j, \\
		&\qquad \ \Sigmam_{i-1}+\Deltam\in S_+^m. \\
	\end{aligned}
\end{equation}

In the following we show that when the choice of the next index selected for attacks is fixed, 
the optimization in \eqref{eq:opt_corr} is convex in the matrix difference. 
\begin{theorem}
\label{th:delta_conv_corr}
Let $\Sigmam_{i-1}\in\Sc_{i-1}$ and $j\in\Ac_{i-1}^{\sf c}$, then the optimization problem given by
\begin{equation}
	\begin{aligned}\label{eq:delta_conv_corr} 
		&\min_{\Deltam}   \ J\left(\Sigmam_{i-1}+\Deltam\right)\\
		&\ \textnormal{s.t.} \ \ \  \Deltam\in\Dc_j, \\
		&\qquad \ \ \Sigmam_{i-1}+\Deltam\in S_+^m, \\
	\end{aligned}
\end{equation}
is a convex optimization problem.
\end{theorem}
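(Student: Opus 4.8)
The plan is to show that~\eqref{eq:delta_conv_corr} is a convex program by verifying separately that its objective is a convex function of $\Deltam$ and that its feasible set is convex. First I would check that the objective is well defined on the feasible set. Since $\Sigmam_{i-1}\in\Sc_{i-1}\subseteq S_+^m$ and $\sigma^2>0$, every feasible $\Deltam$ keeps both matrix arguments of the logarithms strictly positive definite: the constraint $\Sigmam_{i-1}+\Deltam\in S_+^m$ gives $\sigma^2\textbf{I}_m+\Sigmam_{i-1}+\Deltam\succeq\sigma^2\textbf{I}_m\succ\mathbf{0}$, and $\Sigmam_{Y\!Y}+\Sigmam_{i-1}+\Deltam\succeq\Sigmam_{Y\!Y}\succ\mathbf{0}$ because $\Sigmam_{Y\!Y}=\Hm\Sigmam_{X\!X}\Hm^{\sf T}+\sigma^2\textbf{I}_m$ is positive definite. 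Hence $J(\Sigmam_{i-1}+\Deltam)$ lies throughout the region where $\log|\cdot|$ is finite and smooth.

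Next I would decompose $\Deltam\mapsto J(\Sigmam_{i-1}+\Deltam)$ into its three summands and treat each using the standard fact that $\Sm\mapsto\log|\Sm|$ is concave on the cone of positive definite matrices, together with the preservation of concavity/convexity under composition with an affine map. The maps $\Deltam\mapsto\Sigmam_{Y\!Y}+\Sigmam_{i-1}+\Deltam$ and $\Deltam\mapsto\sigma^2\textbf{I}_m+\Sigmam_{i-1}+\Deltam$ are affine, so $\Deltam\mapsto\log|\Sigmam_{Y\!Y}+\Sigmam_{i-1}+\Deltam|$ and $\Deltam\mapsto\log|\sigma^2\textbf{I}_m+\Sigmam_{i-1}+\Deltam|$ are both concave. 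The crucial point is the sign of the coefficients: because $\lambda\geq1$ we have $1-\lambda\leq0$, so $(1-\lambda)\log|\Sigmam_{Y\!Y}+\Sigmam_{i-1}+\Deltam|$ is a nonpositive multiple of a concave function, hence convex; the middle term $-\log|\sigma^2\textbf{I}_m+\Sigmam_{i-1}+\Deltam|$ is minus a concave function, hence convex; and $\lambda\,\textnormal{tr}(\Sigmam_{Y\!Y}^{-1}(\Sigmam_{i-1}+\Deltam))$ is affine in $\Deltam$, hence convex. A sum of convex functions is convex, so the objective is convex.

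It then remains to argue that the feasible set is convex. The set $\Dc_j$ is the image of $\mathds{R}^m$ under the linear map $\sv\mapsto\sv\,\ev_j^{\sf T}+\ev_j\,\sv^{\sf T}$, and is therefore a linear subspace of $\mathds{R}^{m\times m}$, which is convex. The constraint $\Sigmam_{i-1}+\Deltam\in S_+^m$ describes a translate of the positive semidefinite cone, which is convex, so its intersection with $\Dc_j$ is convex. A convex objective minimized over a convex set is a convex optimization problem, which is the claim.

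I expect no substantive obstacle here; the argument is essentially bookkeeping built on the concavity of the log-determinant. The one point that genuinely requires care, rather than being automatic, is the sign in the first term: convexity of $(1-\lambda)\log|\cdot|$ hinges entirely on the hypothesis $\lambda\geq1$ carried over from~\eqref{eq:stealth_opt}, and the conclusion would fail for $\lambda<1$. I would therefore make explicit where that assumption enters, and also note that restricting $\Deltam$ to the subspace $\Dc_j$ rather than to all symmetric matrices does not disturb convexity, since a convex function restricted to an affine subspace remains convex.
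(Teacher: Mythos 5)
Your proposal is correct and follows essentially the same route as the paper: both arguments rest on the concavity of $\log|\cdot|$ composed with an affine map, the sign condition $1-\lambda\leq 0$, the linearity of the trace term, and the convexity of $\Dc_j$ and of the positive semidefinite constraint. Your version merely spells out the details (well-definedness of the objective and the subspace structure of $\Dc_j$) that the paper compresses into a citation of convexity facts.
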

\begin{proof}
	It follows from Lemma \ref{lm:cost_diff}  and some algebraic manipulation that the optimization problem in \eqref{eq:delta_conv_corr} is equivalent to
\begin{equation}
	\begin{aligned}  
		&\min_{{\Deltam}} \  (1-\lambda) \log \left| \Sigmam_{Y\!Y}+\Sigmam_{i-1} + {{\Deltam}}\right| \\
		&\qquad - \log \left| \sigma^2\textnormal{\textbf{I}}_m +\Sigmam_{i-1} + {{\Deltam}} \right|  + \lambda \textnormal{tr}\left(\Sigmam^{-1}_{Y\!Y}{{\Deltam}_i}\right)\\
		&\ \textnormal{s.t.} \ \ \Deltam\in\Dc_j, \\
		&\qquad \  \Sigmam_{i-1}+\Deltam\in S_+^m. \\
	\end{aligned}
\end{equation}
Noting that the sets $\Dc_j$ are convex for all $j\in\Ac_{i-1}^{\sf c}$, that the logarithm terms are  convex \cite{boyd_cvx} for $\lambda\geq 1$, and that the trace term is linear, yields that the optimization problem in \eqref{eq:delta_conv_corr} is convex in ${{\Deltam}}$. This completes the proof.
\end{proof}
 
The proposed greedy construction for independent attack case is described in Algorithm \ref{alg:greedy_corr}. Note that the matrix obtained in the optimization problem in Theorem \ref{th:delta_conv_corr} is constrained by projecting the sum of the update and the previous covariance matrix in the positive semidefinite cone to guarantee that the resulting covariance matrix is indeed positive semidefinite. This is reflected in the last step of Algorithm \ref{alg:greedy_corr} where the resulting matrix construction is projected by minimizing the Frobenius distance to the positive semidefinite cone.

\begin{algorithm}
	\caption{$k$-sparse correlated attack construction}
	\label{alg:greedy_corr}
	\begin{algorithmic}[1]
		\Require 	$\Hm$ in \eqref{eq:obs_noattack};\newline
				$\sigma^2$ in \eqref{EqZ};\newline
				$\Sigmam_{X\!X}$ in \eqref{EqSigmaXX};\newline
				$\lambda$ in \eqref{eq:Gauss_cost}; and \newline
				$k$ in \eqref{EqSetAi}.
		\Ensure $\bm{\Sigma}_{A\!A}$ in \eqref{eq:Gauss_attack}.
		\State Set $\Ac_0=\left\{\emptyset \right\}$
		\State Set $\Sigmam_0=\mathbf{0}$
		\For {$j=1$ to $k$}
		\For {$\ell\in\Ac_{j-1}^{\sf c}$} 
		\State  Compute $\Deltam_{\ell}=\argmin_{\Deltam\in\Dc_\ell}J(\Sigmam_{j-1}+\Deltam)$ 
		\EndFor
		\State Compute $j^{\star}=\argmin_{\ell\in\Ac_{j-1}^{\sf c}}J(\Sigmam_{j-1} + {\Deltam}_\ell )$
		\State Set $\Ac_j = \Ac_{j-1}\cup \left\{j^{\star}\right\}$
		\State Set $ \Sigmam_j = \Sigmam_{j-1} + {\Deltam}_{j^\star} $
		\EndFor
		\State Compute ${\Sigmam}_{A\!A}= \argmin_{\Sm \in \Sc_+^m} \|\Sigmam_{k} - \Sm\|_{{\sf F}}$ 
	\end{algorithmic}
\end{algorithm}

\section{Numerical Results}\label{numerical results}

In this section, we numerically evaluate the performance of the proposed attack construction algorithms on a direct current (DC) state estimation setting for the IEEE 9-Bus, IEEE 14-Bus and IEEE 30-Bus test systems~\cite{UoW_ITC_99}. The voltage magnitudes are set to $1.0$ per unit, which implies that the state estimation is based on the observations of active power flow injections to all the buses and the active power flow between physically connected buses. The Jacobian matrix $\Hm$ is determined by the reactance of the branches and the topology of the corresponding systems. We use MATPOWER~\cite{matpower} to generate $\Hm$ for each test system.
The statistical dependence between the state variables is captured by a Toeplitz model for the covariance matrix $\Sigmam_{X\!X} \in \Sc_+^n$ that arises in a wide range of practical settings, such as autoregressive stationary processes \cite{IE_TSG_16}, \cite{SE_TSG_19}, \cite{ETG_TC_13}.
Specifically, we model the correlation between state variables $X_i$ and $X_j$ with the exponential decay parameter $\rho\in\mathds{R}_+$ that defines the entries of the covariance matrix of the state variables as $\left(\Sigmam_{X\!X}\right)_{ij} =  \rho^{|i-j|}$ with $(i,j) \in \{1,2,\ldots,n\}\times\{1,2,\ldots,n\}$.

In this setting, the performance of the proposed sparse stealth attack is not only a function of the attack constructions but also the correlation parameter $\rho$, the noise variance $\sigma^2$, and the topology of the system described by $\Hm$.
In the simulations, we set the observation model noise regime in terms of the signal to noise ratio (SNR) defined as
\be
\textrm{SNR} \eqdef 10\log_{10}\left(\frac{\textrm{tr}(\textbf{H}\Sigmam_{\textrm{XX}}\textbf{H}^\textrm{\sf T})}{m\sigma^{2}}\right).
\ee

\subsection{Performance in terms of information theoretic cost}
\begin{figure}[ht!]
	\centering
	\includegraphics[width=7.5cm]{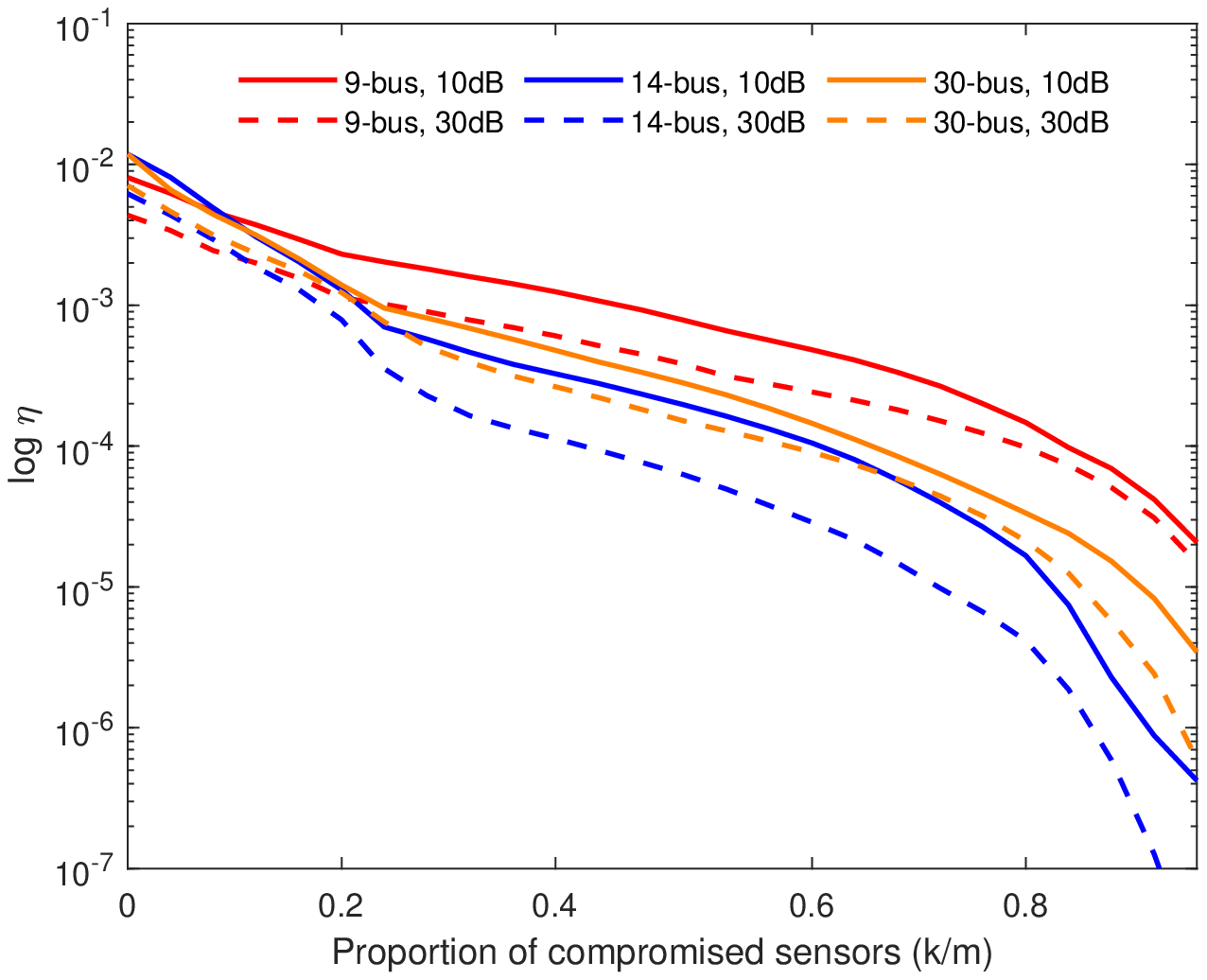}
	\caption{Performance of independent attack constructions on different IEEE test systems with $\rho = 0.9$ and $\lambda=8$.}\label{fig1_case1_rho9}
	\centering
	\includegraphics[width=7.5cm]{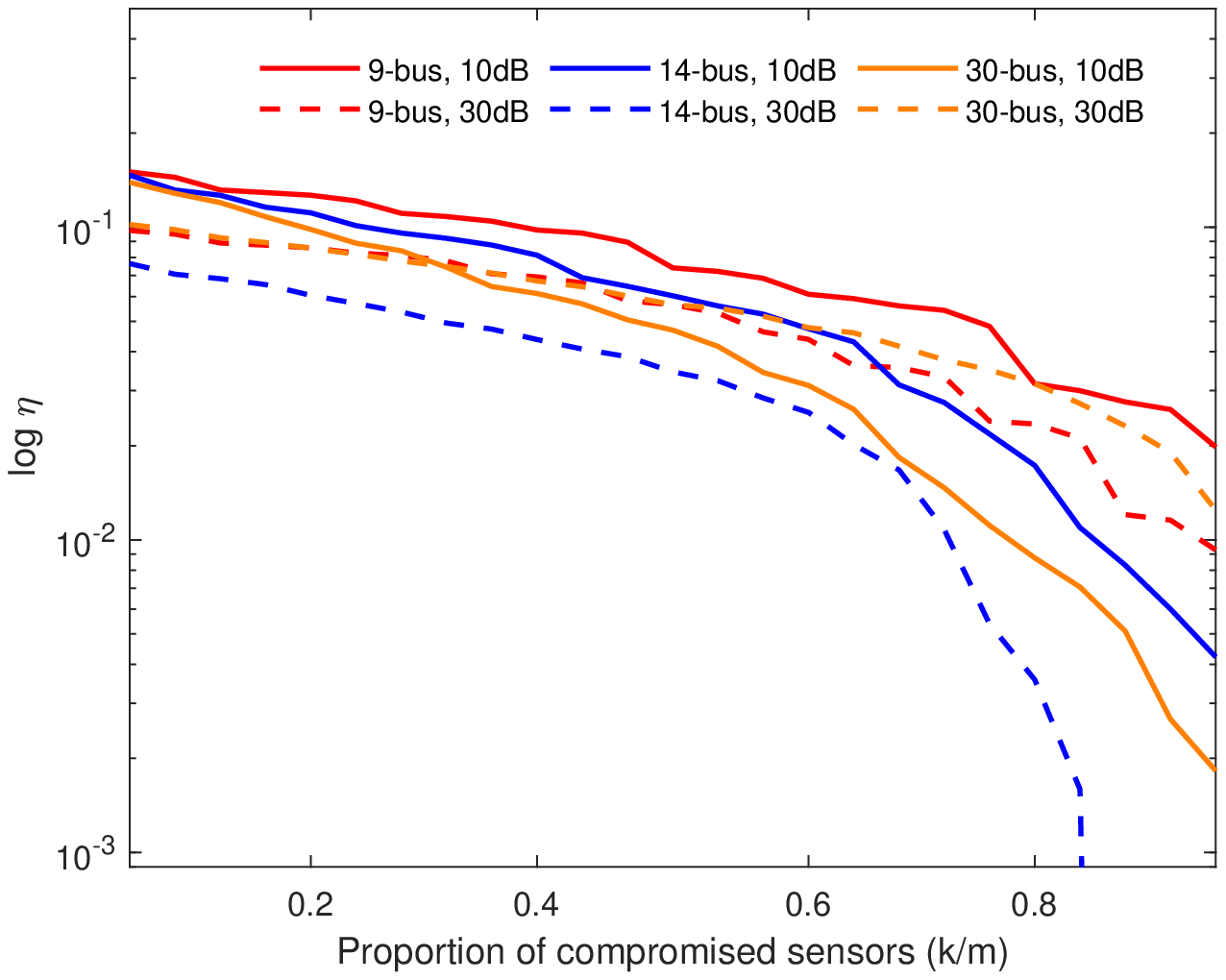}
	\caption{Performance of correlated attack constructions on different IEEE test systems with $\rho = 0.9$ and $\lambda=8$.}\label{fig1_case2_rho9}
\end{figure}
 
Let $\Sigmam_i^k$ be the output of the $k$-sparse attack construction of Algorithm $i$. 
We evaluate the attack performance in terms of the sparsity penalty defined as
\begin{equation}
	\eta \eqdef \frac{J(\Sigmam_i^k) - J(\Sigmam_i^m)}{J(\Sigmam_i^m)},
\end{equation}
where $J(\cdot)$ is the cost defined in \eqref{eq:Gauss_cost}. Note that $J(\Sigmam_i^m)$ denotes the cost induced by the construction when all the sensors are attacked. In that sense, this metric captures the performance loss of the attack when only $k$ sensors are attacked.
Fig.~\ref{fig1_case1_rho9} depicts the performance of the independent sparse stealth attack construction obtained with Algorithm \ref{alg:greedy_independent} in different IEEE test systems as a function of the proportion of compromised sensors, i.e. $k/m$, for correlation parameter $\rho = 0.9$ and $\lambda = 8$. Similarly, Fig.~\ref{fig1_case2_rho9} depicts the performance of the correlated sparse stealth attack construction from Algorithm \ref{alg:greedy_corr} in the same setting as in Fig.~\ref{fig1_case1_rho9}.
As expected, in both cases the sparsity penalty decreases monotonically with the proportion of compromised sensors. In the independent sparse attack case, the sparsity penalty does not change significantly in terms of the proportion of compromised sensors while in the Algorithm \ref{alg:greedy_corr} construction case the sparsity penalty decreases exponentially in the number of compromised sensors.
Note that the exponential decrease slope is approximately constant, which indicates that the advantage of adding more sensors to the attack construction decreases exponentially at an approximately constant rate. Remarkably, this exponential decrease is observed for all system sizes and SNR regimes.
\begin{figure}[htbp]
	\centering
	\includegraphics[width=7.5cm]{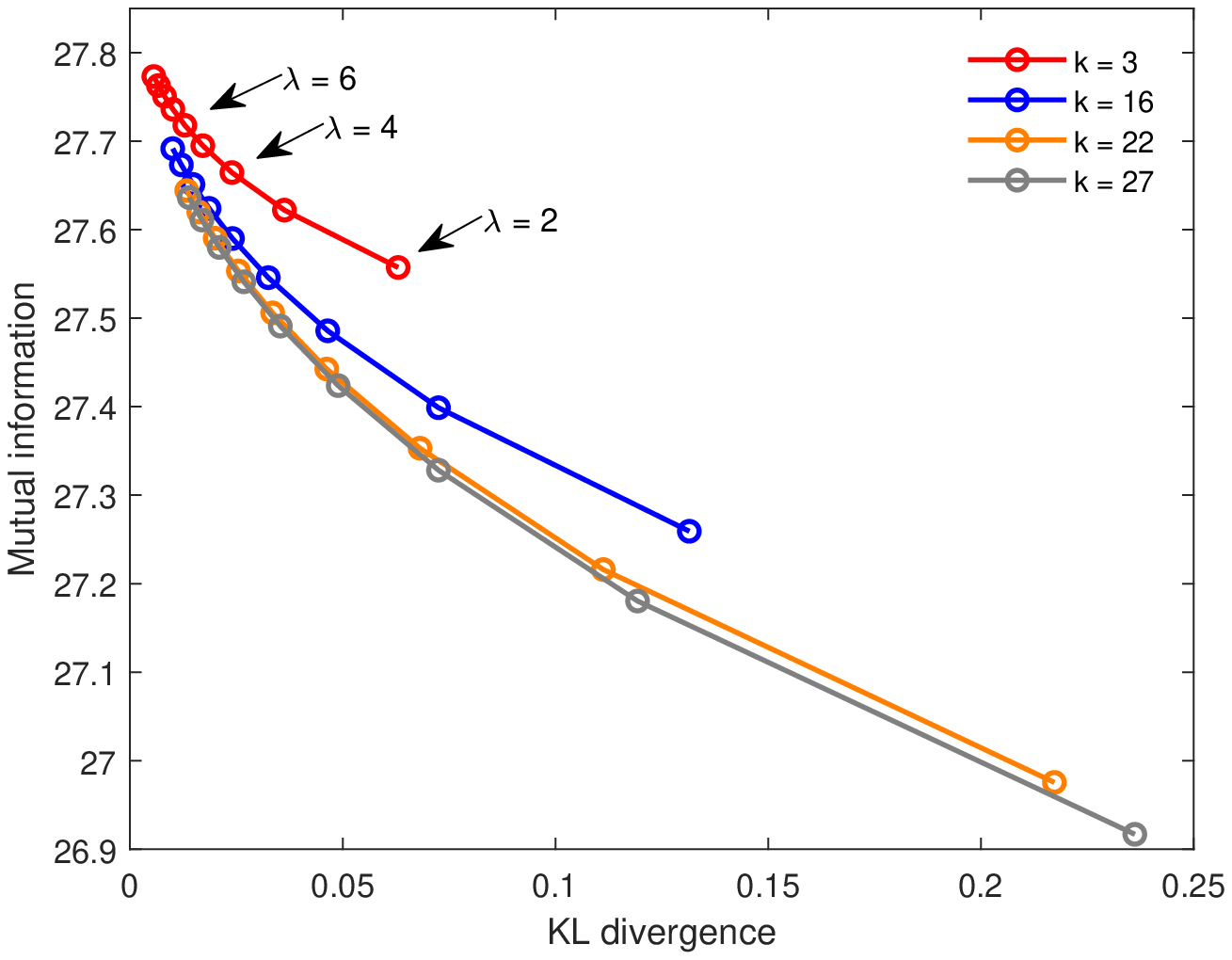}
	\caption{Performance of independent sparse attack construction in terms of mutual information and KL divergence for different values of $\lambda$ on the IEEE 9-bus system with {SNR = 30 dB and $\rho = 0.9$}.}\label{tradeoff_case1_9bus}
	\centering
	\includegraphics[width=7.5cm]{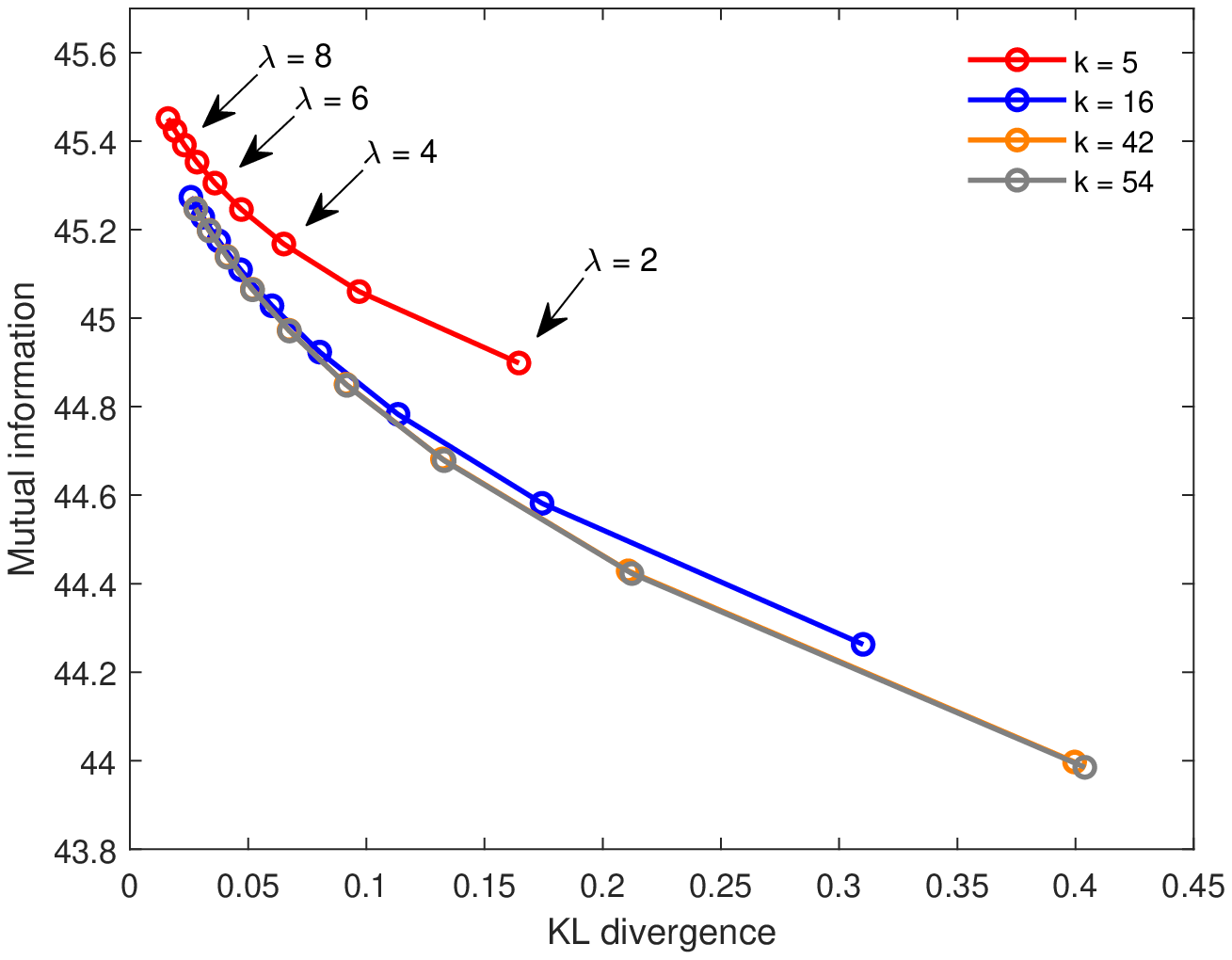}
	\caption{Performance of independent sparse attack construction in terms of mutual information and KL divergence for different values of $\lambda$ on the IEEE 14-bus system with {SNR = 30 dB and $\rho = 0.9$}.}\label{tradeoff_case1_14bus}
\end{figure}
It is worth noting that for most systems, operating with larger SNR yields a lower mutual information for the same KL divergence. However, in Fig. \ref{fig1_case2_rho9} for the IEEE 30-bus test system the $10$ dB and $30$ dB performance curves cross, which indicates that the lower SNR regime benefits the attacker when the number of comprised sensors grows. Interestingly, the size of the network does not determine the performance the attack. For the Algorithm 1  construction, the IEEE 14-bus system is the most vulnerable to attacks, while for the Algorithm 2  construction the statement only holds for high SNR regime. This suggests that the topology of the network fundamentally changes the performance of the attack but the specific mechanisms are left for future study.

\subsection{Performance in terms of the tradeoff between mutual information and KL divergence}

\begin{figure}[htbp]
	\centering
	\includegraphics[width=7.5cm]{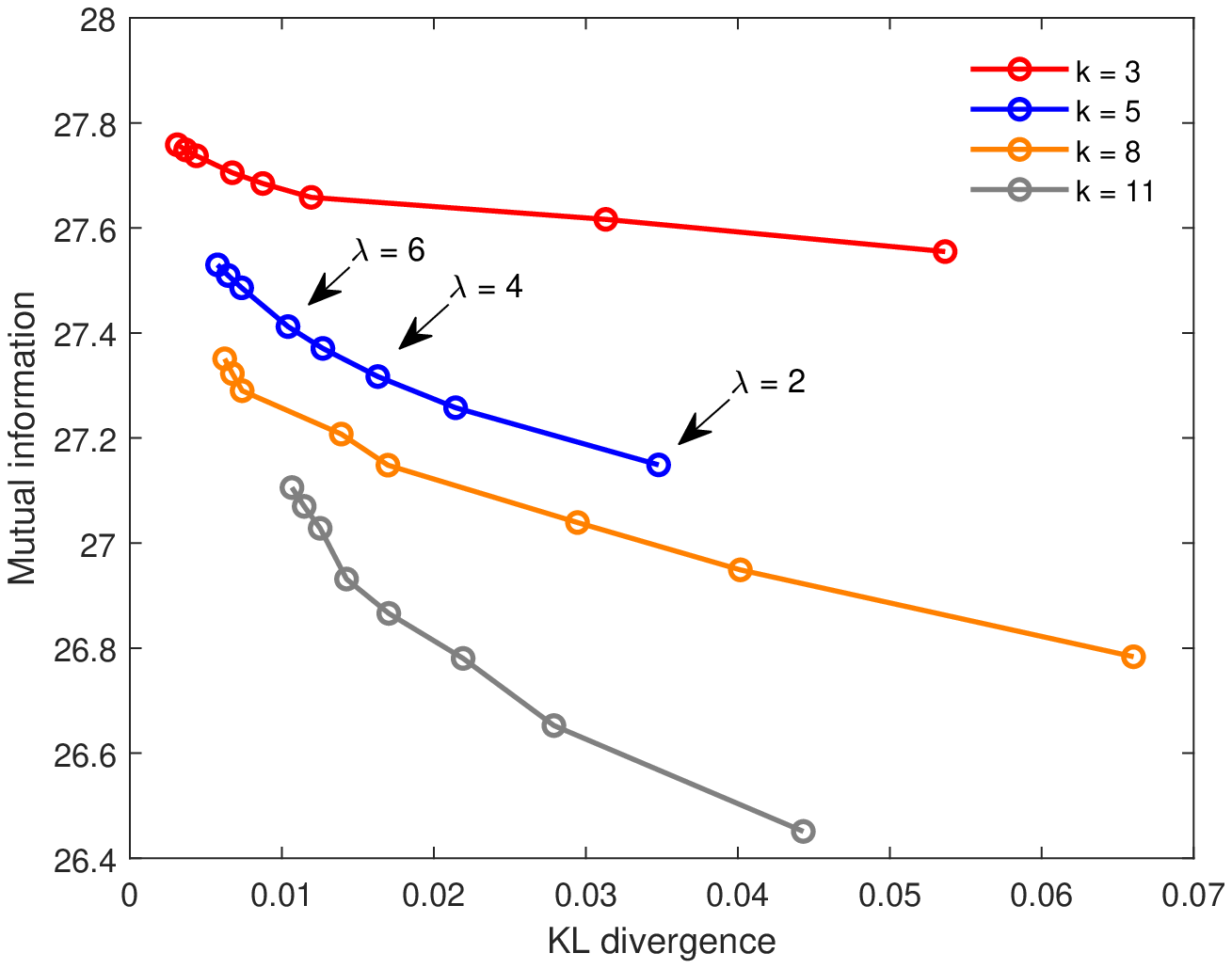}
	\caption{Performance of correlated sparse attack construction in terms of mutual information and KL divergence for different values of $\lambda$ on the IEEE 9-bus system with {SNR = 30 dB and $\rho = 0.9$}.}\label{tradeoff_case2_9bus}
	\centering
	\includegraphics[width=7.5cm]{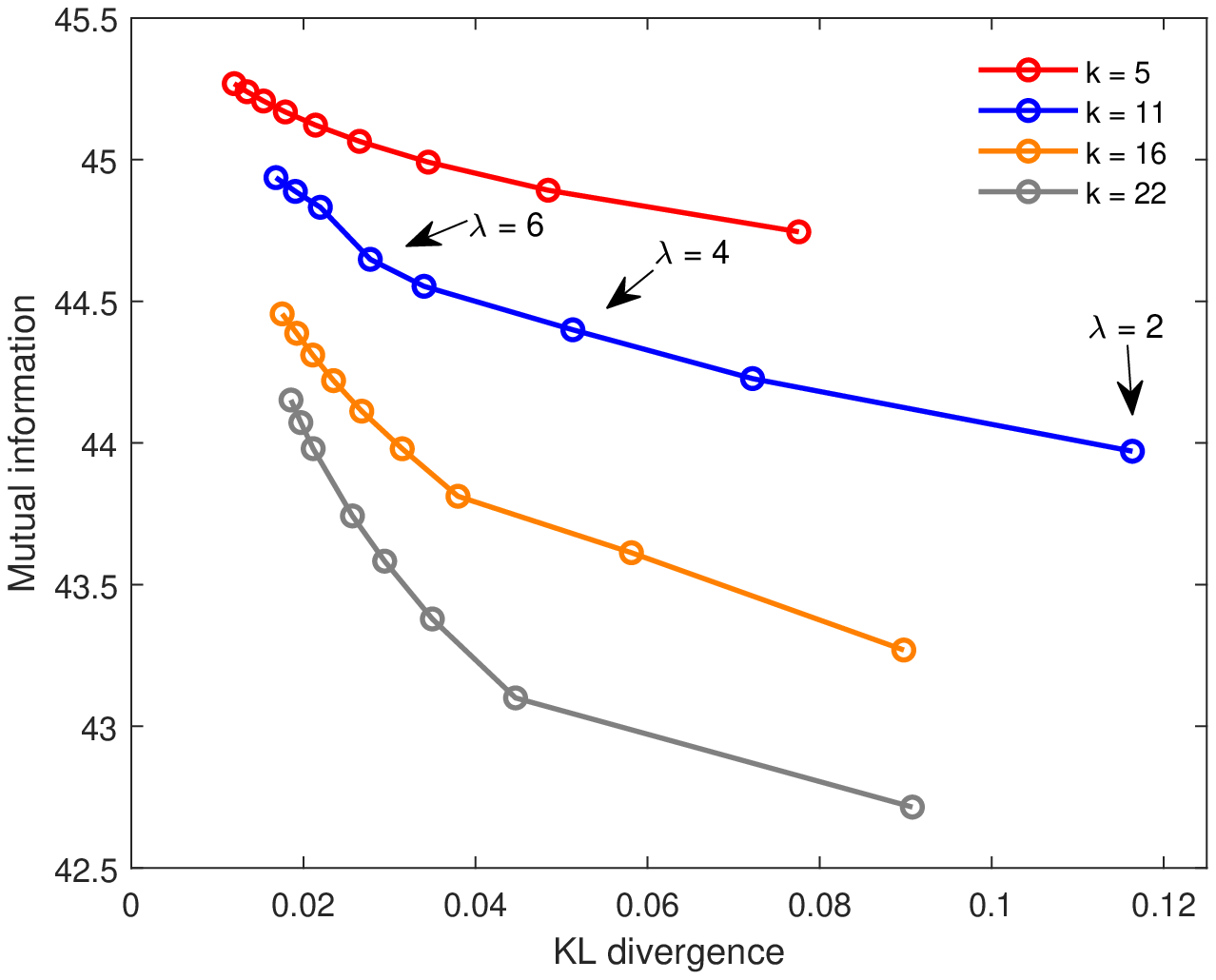}
	\caption{Performance of correlated sparse attack construction in terms of mutual information and KL divergence for different values of $\lambda$ on the IEEE 14-bus system with {SNR = 30 dB and $\rho = 0.9$}.}\label{tradeoff_case2_14bus}
\end{figure}
Fig.~\ref{tradeoff_case1_9bus}  and 
Fig.~\ref{tradeoff_case1_14bus} depict the multiobjective performance of the Algorithm \ref{alg:greedy_independent} attack construction in terms of the tradeoff between mutual information and KL divergence for different values of the proportion of compromised sensors when SNR = 30 dB and $\rho = 0.9$. Similarly, 
Fig. \ref{tradeoff_case2_9bus} and 
Fig. \ref{tradeoff_case2_14bus} depict the same setting for the  Algorithm \ref{alg:greedy_corr} attack construction.
As expected, larger values of the parameter $\lambda$ yield smaller values of KL divergence, i.e. the probability of detection is prioritized in the construction over the mutual information decrease for all the scenarios. Moreover, smaller values of $k$ yield smaller reductions of the mutual information, which indicates that remaining stealthy in a sparse setting necessarily implies reducing the amount of disruption of the attack.
On the other hand, larger values of $k$ enable the attacker to more effectively tradeoff disruption for stealth. This effect is particularly marked in the correlated attack construction case, which reinforces the previous observation regarding the value of coordination between attack variables to achieve stealth. 
\begin{figure}[htbp]
	\centering
		\includegraphics[width=7.5cm]{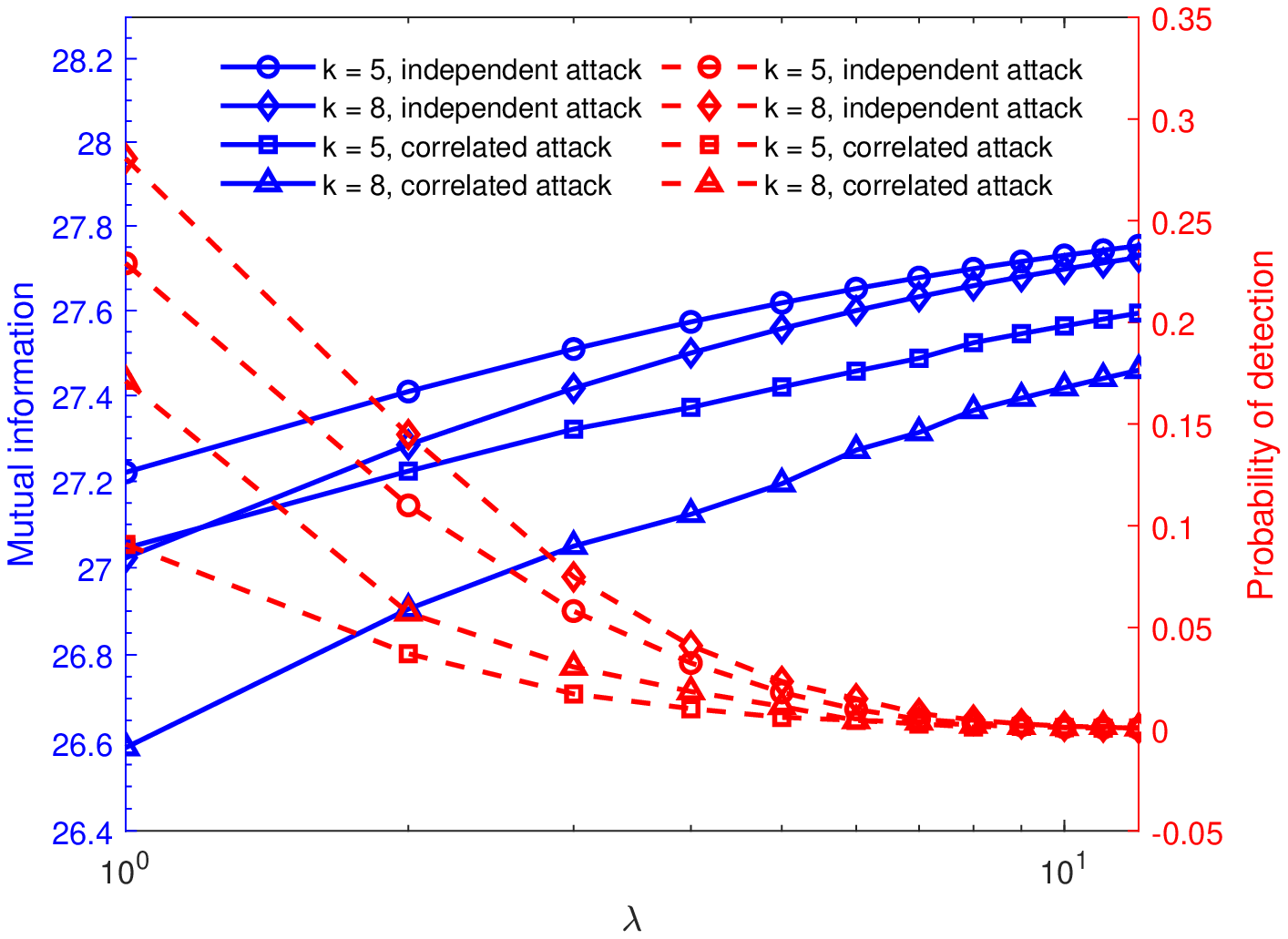}
		\caption{{Performance of attack constructions on IEEE 9-bus test system with $\rho=0.9$, SNR = 30dB and $\tau = 2$.}}\label{MI_P_bus9} 
	\centering
		\includegraphics[width=7.5cm]{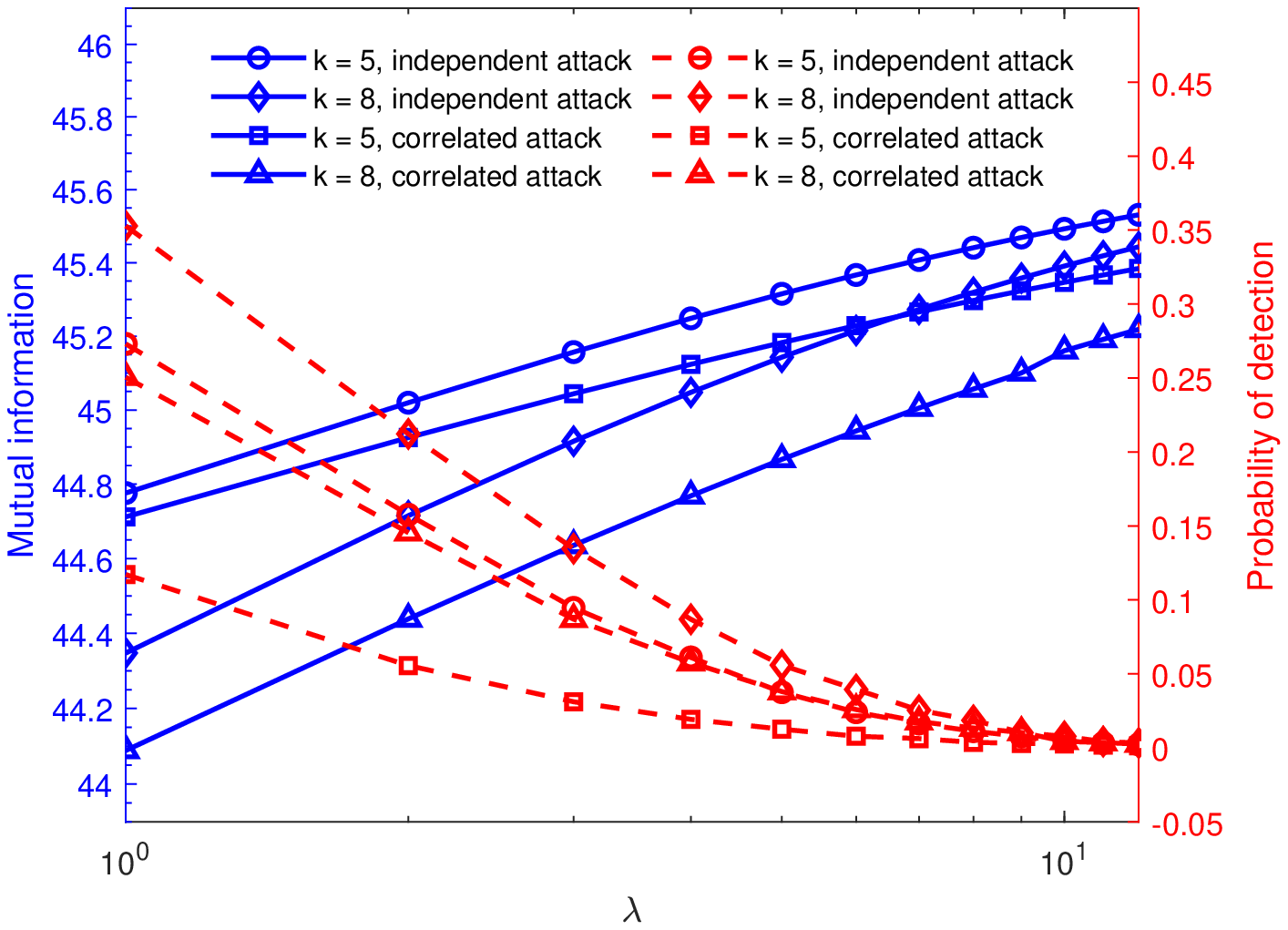}
		\caption{{Performance of attack constructions on IEEE 14-bus test system with $\rho=0.9$, SNR = 30dB and $\tau = 2$.}}\label{MI_P_bus14}
\end{figure}
{\subsection{Performance in terms of mutual information and probability of attack detection}
Fig.~\ref{MI_P_bus9} and Fig.~\ref{MI_P_bus14} depict the performance of the attack construction for different values of $\lambda$ and sparse constraint $k$ with SNR = 30 dB, $\rho = 0.9$ and $\tau = 2$ for the IEEE 9-bus and the IEEE 14-bus test systems, respectively. As expected, larger values of the parameter $\lambda$ yield smaller values of the probability of attack detection while increasing the mutual information between the vector of state variables and the vector of observations in the systems. We note that the probability of attack detection decreases approximately linearly with respect to $\log \lambda$ for small values of $\lambda$. Simultaneously for this range of $\lambda$, mutual information increases approximately linearly with respect to $\log \lambda$. For moderate values of $\lambda$, we observe a significant decrease in the probability of detection with respect to $\log \lambda$ with a smaller rate of increase in mutual information. The comparison between independent and correlated attack constructions, shows that for the same sparsity constraint, the correlated attack construction successfully exploits the coordination between different locations to yield a smaller probability of detection and a smaller mutual information.}

\section{Conclusion}\label{conclusion}

We have proposed novel stealth attack construction with sparsity constraints. The insight obtained from the problem of incorporating an additional sensor to the attack has been distilled to construct heuristic greedy constructions for both the independent and the correlated attack cases. We show that for both cases, the greedy step results in a convex optimization problem which can be solved efficiently and yields a low complexity attack update rule. We have numerically evaluated the attack performance in several IEEE test systems and shown that it is feasible to implement disruptive attacks that have access to small number of observations. Furthermore, we have observed that the topology and the SNR regime govern the performance of the attack and numerically characterized the dependence.


\balance
 
\bibliographystyle{IEEEtran}
\bibliography{TSG_21}

\end{document}